\begin{document}


\def\makarovsmirnovicmpsle#1{SLE$\left(#1\right)$}
\def\makarovsmirnovicmpbr#1{\left(#1\right)}
\def\makarovsmirnovicmpbrs#1{\left\{#1\right\}}
\newtheorem{makarovsmirnovproblem}{Problem}
\renewcommand{\themakarovsmirnovproblem}{\Alph{makarovsmirnovproblem}}

\title{Off-critical lattice models and massive SLEs}

\author{N. MAKAROV}

\address{Mathematics 253-37, Caltech\\
Pasadena, CA 91125, USA\\
makarov@caltech.edu}

\author{S. SMIRNOV}

\address{Section de math\'ematiques, 
Universit\'e de Gen\`eve\\
2-4, rue du Li\`evre, c.p. 64,   
1211 Gen\`eve 4, Switzerland\\
E-mail: stanislav.smirnov@math.unige.ch}

\begin{abstract}
We suggest how versions of Schramm's SLE
can be used to describe the scaling limit of 
some off-critical 2D lattice models.
Many open questions remain.
\end{abstract}

\keywords{lattice models, SLE, CFT}

\bodymatter

\section{Introduction}

During the last 25 years
Conformal Field Theory (CFT) was successful in 
heuristically describing 
conformally invariant scaling limits
of 2D lattice models at criticality, such as
the Ising model, percolation, Self-Avoiding Polymers, Potts models,
--- see Ref. \refcite{cft-collection} for a collection of the founding papers of the subject.

Recently there was much progress in the mathematical understanding,
in large part due to Oded Schramm's introduction of {SLE}{s},
or \emph{Schramm Loewner Evolutions}.
\makarovsmirnovicmpsle{\kappa}~is a one-parameter family of random conformally invariant curves,
constructed by running a Loewner Evolution with (real valued) Brownian motion 
as the driving term.
For several models convergence to {SLE}~was established in the scaling limit;
moreover, {SLE}~is well-adapted to calculations, which typically boil down to 
It\^o's stochastic calculus.
See Ref. \refcite{schramm-icm,smirnov-icm1} for an exposition and references.

The key property of {SLE}~is its conformal invariance, 
which is expected in 2D lattice models only at criticality,
and the question naturally arises:

\centerline{\emph{Can {SLE}~success be replicated for off-critical models?}}

\noindent
In most off-critical cases to obtain a non-trivial scaling limit one has to adjust
some parameter (like temperature in the Ising model or probability of an open site in percolation),
sending it at an appropriate speed to the critical value.
Such limits lead to massive field theories, 
so the question can be reformulated as whether one can use {SLE}{s} to describe those.
Massive CFTs are no longer conformally invariant, but are still
covariant when mass is considered as a variable covariant density.

So far there was limited progress on off-critical {SLE}{s}, 
related work limited to
Ref. \refcite{bauer-bernard-kytola,bauer-bernard-cantini,nolin-werner,garban-pete-schramm,makarov-smirnov-mass}.
Below we propose an approach based on the combination of
{SLE}~\emph{Martingale Observables} (MO) with potential theory and stochastic analysis,
and we start by describing the critical case.

\subsection{SLE~and  critical lattice models}

Suppose that a critical lattice model defines for every simply connected domain $\Omega$
with two marked boundary points $a$, $b$ a random discrete simple curve 
joining them inside $\Omega$.
One can think e.g. of a domain wall boundary in the Ising model with Dobrushin boundary conditions 
or of a LERW -- the Loop Erasure of the Random Walk from $a$ to $\partial\Omega\setminus\makarovsmirnovicmpbrs{b}$
conditioned on ending at $b$.

The key observation made by Oded Schramm was that if 
a conformally invariant scaling limit $\gamma$
of the discrete curves  exists 
and satisfies Markov property (i.e. the curve progressively drawn from $a$ to $b$
does not distinguish its past from the boudary of the domain $\Omega$),
then it can be described by \makarovsmirnovicmpsle{\kappa} for some $\kappa\in[0,\infty[$.
Take an appropriate time parameterization $\gamma(t)$
and denote by $\Omega_t$ the component at $b$ of the domain $\Omega\setminus\gamma[0,t]$.
Then the random Loewner conformal map 
$$Z_t(z):\,\Omega_t\to{\mathbb C}_+,~~~\gamma(t)\mapsto0,~b\mapsto\infty,~\text{normalized~at}~ b$$
(in what we call a \emph{Loewner chart}),
satisfies the \emph{Loewner equation} with the Brownian Motion as the driving term.
We write it as a \emph{stochastic differential equation} (SDE)
\begin{equation}\label{eq:sle}
d Z_t(z)=\frac2{Z_t(z)}dt -d\xi_t,~~
d\xi_t:=\sqrt\kappa d B_t,
\end{equation}
with the SLE \emph{driving term}
$\xi_t$ given by the standard Brownian motion $B_t$.

Moreover, it turns out that to deduce {SLE}~ convergence it is enough to show
that just one \emph{observable}
$M_t(z)=M(z,\Omega_t)$
(e.g. spin correlation or percolation probability)
is conformally covariant and Markov in the limit, see Ref. \refcite{smirnov-icm1} for a discussion.
So far conformal invariance of observables was always estbalished by showing that they
are discrete holomorphic (or harmonic) functions of a point satisfying
some \emph{Boundary Value Problem} (BVP) -- Dirichlet, Neumann or Riemann-Hilbert.

The covariant holomorphic MOs for {SLE}~can be easily classified, leading to
a one parameter family for each $\kappa$ or each \emph{spin} (conformal dimension) $\sigma$,
namely
\begin{equation}\label{eq:mo}
M_t^{\kappa,\sigma,\beta}(z)=M_t(z)=Z_t(z)^\beta Z_t'(z)^\sigma,~~~\text{with}~\sigma=\beta+\frac{\beta(\beta-1)}4\kappa,
\end{equation}
is $\makarovsmirnovicmpbr{dz}^\sigma$-covariant.
It is characterized (see Ref.~\refcite{smirnov-icm1}) by the Riemann-Hilbert BVP
\begin{equation}\label{eq:mobvp}
\bar\partial M_t=0~\text{in}~\Omega_t,~~M_t(z)\parallel \tau^{-\sigma}~\text{on}~\partial\Omega_t,~~\text{appropriate~singularities~at}~\gamma(t),\,b.
\end{equation}
Note that the MO above can also be rewritten in terms
of the complex Poisson kernel in $\Omega_t$ at $\gamma(t)$,
$P_t(z):=-1/Z_t(z)$.

Of special interest is also the family of holomorphic MOs  for \makarovsmirnovicmpsle\kappa:
\begin{equation}\label{eq:bmo}
M_t^\kappa(z)=M_t(z)=\log Z_t(z)+\makarovsmirnovicmpbr{1-\frac\kappa4}\log Z_t'(z).
\end{equation}
Those are covariant pre-pre-Schwarzian forms, and so
it suffices to study their imaginary parts --  harmonic MOs characterized by
\begin{equation}\label{eq:bmobvp}
\Delta M_t(z)=0~\text{in}~\Omega,~~M_t(z)=\arg Z_t(z)+\makarovsmirnovicmpbr{1-\frac\kappa4}\arg Z_t'(z)~\text{on}~\partial\Omega_t,
\end{equation}
the Dirichlet BVP being well posed and independent of time parameterization,
i.e. the choice of normalization at $b$ of the map $Z_t(z)$.
Such \emph{bosonic} observables 
(in Coulomb gas formalism of CFT, they are 1-point functions of 
the bosonic field in the presence of background charge)
feature prominently in 
Ref.~\refcite{schramm-sheffield-harmonic,schramm-sheffield-dgff,dubedat-partition}.

The observables above have the martingale property with respect to corresponding \makarovsmirnovicmpsle{\kappa}
by a simple application of It\^o's calculus:
vanishing of their drifts under the diffusion (\ref{eq:sle})
is easy to deduce using (\ref{eq:sle}) and its corollary
\begin{equation}\label{eq:led}
d Z_t'(z)=-\frac{2Z_t'(z)}{Z_t(z)^2}dt.
\end{equation}
Conversely, if a random curve admits a MO of the mentioned form,
similar calculations prove the curve to be a \makarovsmirnovicmpsle{\kappa}.

\subsection{SLE and off-critical perturbations}

Most discrete holomorphic observables 
studied in Ref. \refcite{MR2044671,schramm-sheffield-harmonic,
smirnov-fk1,riva-cardy-hol,schramm-sheffield-dgff,chelkak-smirnov-iso, chelkak-smirnov-spin} 
behave well under some off-critical
perturbations, leading to discrete \emph{Massive Martingale Observable} (MMO) 
satisfying the massive version
of the Cauchy-Riemann or Laplace equations:
\begin{align}\label{eq:mcr}
\bar\partial^{(m)} M^{(m)} &:=\bar\partial M^{(m)} - i m \overline{M^{(m)}}=0,\\
\label{eq:mlap}
\Delta^{(m)} M^{(m)} &:=\Delta M^{(m)} - m^2 {M^{(m)}}=0,
\end{align}
inside $\Omega$ and solving the same boundary value problem as the critical MO.
Here mass $m$ can be understood as a function of $z$,
changing covariantly under conformal transformations.
In discrete setting a power of the lattice mesh enters Eq.~(\ref{eq:mcr},\ref{eq:mlap}),
thus one has to tend the perturbation parameter to its critical value
in a coordinated way with the lattice mesh, so that the mass does not blow up
in the scaling limit.

Given a discrete random curve with a discrete MMO,  
we can ask, to what extent the {SLE}~ theory can be applied in the massive case,
posing the following problems:
\begin{makarovsmirnovproblem}\label{p1}
Show that discrete MMO has a scaling limit, which is then a MMO for some random curve.
Find a SDE for the driving diffusion $d\xi_t$ replacing the Brownian motion
in the Loewner Evolution (\ref{eq:sle}).
\end{makarovsmirnovproblem}
\begin{makarovsmirnovproblem}\label{p2}
Show that this SDE has a unique solution and the corresponding random curve
is the scaling limit of the original discrete curves.
\end{makarovsmirnovproblem}
\begin{makarovsmirnovproblem}\label{p3}
Use massive {SLE}{s} to derive properties of massive field theories and
massive {SLE}~curves.
\end{makarovsmirnovproblem}
Note that massive models (as well as massive holomorphic functions)
usually are even easier to control, so the main problem is the absence
of conformal invariance, or rather presence of conformal covariance with respect to the mass.
Consequently the drift terms in the corresponding diffusions depend on (Euclidean) geometry
of the domains constructed dynamically by Loewner evolution, 
leading to SDEs with general previsible path functionals, which are rather complicated.

\paragraph{Outline of the paper.}
We were able to advance within the suggested framework,
and in Section~\ref{sec:thm} we  present some of our results
from the forthcoming Ref. \refcite{makarov-smirnov-mass}.
In Section~\ref{sec:proofs} we provide two essential elements of the proof, 
namely we give an example of the drift computation (in the case of bosonic observables) 
and establish some simple a priory estimates of the drift (in the $\kappa=4$ bosonic case). 
Finally we conclude with a list of open questions in Section~\ref{sec:quest}.

\paragraph{Notation.}
We consider a simply connected domain $\Omega\subset{\mathbb C}$ 
with two marked 
points $a, b\in\partial\Omega$,
joined by a random Markov curve $\gamma(t)$ with Loewner parameterization. 
The Loewner map to the half-plane is denoted by $Z_t(z)$.
We denote by $h$ the harmonic measure, by $G$ the Green's function, and its boundary differentials by
$$P(u,z)= N_u G(\cdot,z),\qquad K(u,b)= N_b P(u,\cdot)$$
(the Poisson  kernel  and Poisson boundary  kernel respectively).
Here $N_u$ stands for  the normal derivative at $u$
in the Loewner boundary chart.
By $G^{(m)},P^{(m)},\dots$ we denote the 
corresponding massive objects.
The index $t$ signifies that we work in the domain $\Omega_t$,
i.e. the component at $b$ of $\Omega\setminus\gamma[0,t]$,
and $\iint$ denotes the area integral.

\section{Results}\label{sec:thm}

Below we go through all the cases discovered to-date where a lattice model
admits a discrete MO with a massive perturbation,
and describe results which will appear in Ref.~\refcite{makarov-smirnov-mass}.
For simplicity we restrict ourselves to the case of constant mass $m$,
but the methods should apply to an appropriate class of variable densities $m(z)$.

\subsection{Loop Erased Random Walk with killing, $\kappa=2$}\label{sub:lerw} 

The LERW converges to \makarovsmirnovicmpsle2, as shown in Ref. \refcite{MR2044671}.
The law $\gamma(t)$ of the LERW${}^{(m)}$ on a lattice of mesh $\epsilon$ is defined  by applying the same
loop erasing  procedure to the random walk with a killing rate 
(i.e. probability to die out at each step) 
$$\delta=m^2\epsilon^2.$$
The RW is done inside a domain $\Omega$ from the boundary point $a$ to the rest of the boundary and conditioned on
ending at $b\in\partial\Omega$ -- this ensures the Markov property.

\begin{theorem} For a bounded domain the scaling limit of  LERW${}^{(m)}$ exists and is given by the 
massive \makarovsmirnovicmpsle2 with the driving diffusion
\begin{equation}\label{eq:lerw}
\xi_t=\sqrt2dB_t+\lambda_tdt,~~~\lambda_t=2\left[\log K^{(m)}(\cdot,b; \Omega_t)\right]'(\gamma_t)
~~({\rm in~the~chart}~Z_t).
\end{equation}
The law of the scaling limit  is absolutely continuous with respect to \makarovsmirnovicmpsle2.
\end{theorem}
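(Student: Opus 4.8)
The plan is to run the standard martingale-observable route to SLE convergence in the massive regime, then to read off absolute continuity directly from the resulting driving SDE. The natural observable for LERW$^{(m)}$ is the massive Poisson kernel $P^{(m)}(\cdot,b;\Omega_t)$: in the bulk variable it is discrete massive harmonic, i.e. satisfies the lattice form of~(\ref{eq:mlap}), with the same boundary data as the critical kernel, and by the domain Markov property of the killed walk conditioned to exit at $b$ it is a martingale along the exploration. This is the MMO demanded by Problem~\ref{p1}; its boundary differential at the tip is the Poisson boundary kernel $K^{(m)}$ that will appear in the drift.

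First I would show that, as the mesh $\epsilon\to0$ with killing tied to the mass by $\delta=m^2\epsilon^2$, this observable converges to its continuous counterpart. On a bounded domain the massive objects are controlled perturbations of the critical ones: the massive Green's function satisfies the resolvent identity $G^{(m)}=G-m^2\iint G\,G^{(m)}$ (an area integral in the intermediate variable), convergent by boundedness of $G$ on $\Omega$, and taking boundary differentials transfers this control to $P^{(m)}$ and $K^{(m)}$. Convergence of discrete massive harmonic functions then follows from the critical convergence of Ref.~\refcite{MR2044671} together with smoothness of the perturbation; combined with tightness of the curves, this produces a subsequential scaling limit that is a Loewner curve driven by a continuous semimartingale $\xi_t$.

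The drift is pinned down by It\^o's calculus. Reading the observable in the Loewner chart $Z_t$, writing $d\xi_t=\sqrt2\,dB_t+\lambda_t\,dt$, and using~(\ref{eq:sle}) together with~(\ref{eq:led}), I would collect the finite-variation part of $dM_t$ and set it to zero. Since the observable depends on the domain only through the tip $\gamma_t$ and the conformal geometry, the vanishing-drift condition forces $\lambda_t=2\left[\log K^{(m)}(\cdot,b;\Omega_t)\right]'(\gamma_t)$, which is exactly~(\ref{eq:lerw}); this is the same drift computation carried out for bosonic observables in Section~\ref{sec:proofs}, with the mass entering only through $K\mapsto K^{(m)}$.

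Finally, absolute continuity with respect to \makarovsmirnovicmpsle2 follows from Girsanov's theorem, which simultaneously supplies the solution required by Problem~\ref{p2}: reweighting \makarovsmirnovicmpsle2 by the stochastic exponential of $\lambda_t/\sqrt2$ defines a law that solves the massive SDE and is absolutely continuous with respect to \makarovsmirnovicmpsle2, provided this exponential is a true martingale, which holds once $\lambda_t$ is uniformly bounded (Novikov's condition). The hard part will be precisely this a priori bound on $\lambda_t$, the logarithmic derivative of the massive kernel at the moving tip $\gamma_t$, where $\Omega_t$ is maximally rough and where one must moreover control it up to the terminal time at which $\gamma$ reaches $b$. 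I expect to obtain it by comparing $K^{(m)}$ with $K$ near $\gamma_t$ --- the massive correction being smooth there, since $m$ enters only as a bounded area density --- so that $\lambda_t$ stays uniformly bounded on the bounded domain $\Omega$, the analogue of the $\kappa=4$ estimate sketched in Section~\ref{sec:proofs}. Identifying this reweighted law with the subsequential limit above, via convergence of the discrete driving drift to $\lambda_t$, then completes the proof.
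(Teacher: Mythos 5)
Your overall scaffold (discrete MMO, convergence, It\^o drift identification, Girsanov/Novikov, identification of the limit) matches the paper's, and your resolvent identity $G^{(m)}=G-m^2\iint G\,G^{(m)}$ is the right perturbative control. But there are two genuine gaps. First, you have the wrong observable. You take the massive Poisson kernel with pole at $b$, viewed as a function of the bulk point; the paper's discrete MMO is the ratio $G^{(\delta)}(\gamma(t),z;\Omega_t)/h^{(\delta)}(\gamma(t),b;\Omega_t)$, whose scaling limit is $P_t^{(m)}(\gamma(t),z)/K_t^{(m)}(\gamma(t),b)$ --- pole at the \emph{moving tip}, normalized by the massive boundary kernel at $b$ (this is the case $\beta=-1$, $\sigma=0$ of (\ref{eq:mo},\ref{eq:mobvp})). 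This is not cosmetic: already at $m=0$ your function is not a martingale, since in the Loewner chart it is proportional to ${\rm Im}\,Z_t(z)$ and by (\ref{eq:sle}) $d\,{\rm Im}\,Z_t={\rm Im}(2/Z_t)\,dt\neq0$ for every $\kappa$ (the real driving term $\xi_t$ drops out of the imaginary part). Worse, because $\xi_t$ is real, your observable has no first-order dependence on $d\lambda_t$, so ``collecting the finite-variation part and setting it to zero'' cannot determine $\lambda_t$ at all; the $\lambda$-dependence enters only through the pole sitting at the tip, producing the term $P_t\,d\lambda_t$ as in (\ref{eq:dm}). In the paper's computation the drift (\ref{eq:lerw}) is precisely the logarithmic derivative, in the pole variable at $\gamma_t$, of the normalizing factor $K^{(m)}(\cdot,b;\Omega_t)$ --- a factor your observable does not contain. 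Your statement that vanishing drift ``forces'' (\ref{eq:lerw}) is therefore an assertion of the answer, not a derivation, and with your observable the computation would in fact fail.

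Second, the Novikov step is misconceived. Uniform boundedness of $\lambda_t$ is insufficient: in the chart normalized at $b$ the Loewner time horizon is infinite, so $|\lambda_t|\le C$ gives no control of $\int_0^\infty\lambda_t^2\,dt$ and does not verify (\ref{eq:novikov}). What the paper proves (and what you need) is the pathwise \emph{time-integrated} bound $\int_0^\infty\lambda_t^2\,dt\le{\rm const}$, and the mechanism --- the analogue of Proposition~\ref{prop:novikov} --- is to dominate $|\lambda_t|$ by an area integral of the Poisson kernel at the tip and then integrate in time using Hadamard's variational formula (\ref{eq:hadamard}), so that $\int_0^\infty\!\iint\iint P_t(z)P_t(w)\,dt$ telescopes to $\iint\iint\left[G_0(z,w)-G_\infty(z,w)\right]\le\iint\iint G_0<\infty$, finiteness coming from boundedness of $\Omega$ via $G_0(z,w)\le-\log|z-w|+{\rm const}$. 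Your hoped-for pointwise bound ``because the massive correction is smooth near the tip'' is also dubious on its own terms: differentiating the resolvent correction in the pole variable produces a kernel only borderline integrable at $\gamma_t$, exactly where $\Omega_t$ is rough. A smaller divergence: for the final identification the paper does not run tightness plus subsequential limits, but uses absolute continuity of the massive LERW with respect to critical LERW at the discrete level, inheriting the scaling limit from Ref.~\refcite{MR2044671}; your route would additionally require a priori regularity of the massive discrete curves that you only gesture at, whereas the discrete Radon--Nikodym comparison gives it for free.
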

The proof begins by constructing a discrete harmonic MMO
(on appropriate discretization of the domain $\Omega$):
\begin{equation*}
M_t^{(\delta)}(z)\equiv M^{(\delta)}(z;\gamma(t),b,\Omega_t)
:=\frac{G^{(\delta)}(\gamma(t),z;\partial\Omega\cap\gamma[0,t[\,)}
{h^{(\delta)}(\gamma(t),b; \partial\Omega\cap\gamma[0,t[\,)},\qquad z\in \Omega_t^{\text{discrete}},
\end{equation*}
where $G^{(\delta)}$ and $h^{(\delta)}$  denote discrete Green's function and discrete harmonic measure respectively, 
both with killing rate $\delta$. 
Then we we fix a boundary chart at $b$ and
show that after an appropriate normalizations the discrete MMO converges 
to 
\begin{equation}\label{eq:lerwmo}
\frac{P_t^{(m)}(\gamma(t),z)}{K_t^{(m)}(\gamma(t),b)},\qquad (z\in \Omega_t),
\end{equation}
a continuous MMO 
corresponding to (\ref{eq:mo},\ref{eq:mobvp}) with $\beta=-1$ and $\sigma=0$.
Now let the driving term $\xi_t$ be an It\^o process 
$$d\xi_t=\lambda(t,\omega)~dt+\sigma(t,\omega)~dB_t.$$
We claim that if the Loewner chain has MMO (\ref{eq:lerwmo}),
then $\sigma\equiv\sqrt2$ and $\lambda$ satisfies (\ref{eq:lerw}). 
We now have the equation
$$d\xi_t=\lambda(t,\xi_\bullet)~dt+\sqrt2~dB_t,$$
with the \emph{previsible} path functional $\lambda(t,\xi_\bullet)$ given by (\ref{eq:lerw}). 
We claim that this equation satisfies the standard finiteness  condition 
(cf. Chapter 5 in Ref.~\refcite{rogers-williams-2}):
$\lambda$ is locally integrable on almost all paths. 
The question of course arises of  existence and uniqueness of SDE solutions. 
In the case under consideration the answer is quite  simple:
the path functional $\lambda$ satisfies the Novikov's condition
(see 
Ref. \refcite{oksendal})
\begin{equation}\label{eq:novikov}
{\mathbb E}\left[\exp\makarovsmirnovicmpbr{\frac12\int_0^\infty\lambda(t,\xi_\bullet)^2\,dt}\right]<\infty,
\end{equation}
(and in fact,$\int_0^\infty \lambda(t,\xi_\bullet)^2~dt\le\mathrm{const} <\infty$ holds for all paths).
In particular, the SDE has a unique in law (weak) solution $\xi_t$
and its law is absolutely continuous with respect to $\sqrt 2 dB_t$. 
See Proposition~\ref{prop:novikov} below for a similar argument.
One can then deduce that the massive {SLE}~is the scaling limit of the massive LERW by using
the absolute continuity of latter with respect to the LERW.

\subsection{Massive Harmonic Explorer and 
Gaussian Free Field, $\kappa=4$ } 

We use the original Harmonic Explorer (HE) construction 
on a hexagonal lattice, Ref.~\refcite{schramm-sheffield-harmonic}. 
At each step, the boundary consists of two  arcs, 
and the explorer turns in the direction of one of the arcs with probability equal to 
its harmonic measure in the current domain $\Omega_t$ evaluated at the ``growth point.''
Introducing the killing rate as in the previous section, we get one distinction 
from the massless case --
the two harmonic measures don't sum up to one, so with complementary probability we toss a fair coin to determine the direction of the turn. 
The MMO is the massive version of the bosonic MO (\ref{eq:bmo},\ref{eq:bmobvp}) for $\kappa=4$.

The theory is completely analogous to that of the massive LERW.
If the initial domain is bounded, then the scaling limit exists 
and is absolutely continuous with respect to \makarovsmirnovicmpsle4, the scaling limit of massless explorer. 
We deduce the formula (\ref{eq:drift}) for the drift in more general bosonic case in Proposition~\ref{prop:drift}
(for the HE $M_t$ is the difference of the  \emph{massless} harmonic measures of the two boundary arcs). 

The same framework holds for the massive version of the discrete Gaussian Free Field
discussed in Ref.~\refcite{schramm-sheffield-dgff},
and parts of our construction apply to  general bosonic observables.

\subsection{Massive Peano curves, $\kappa=8$}  

For a lattice approximation of $\Omega$,
we choose  its cycle-free subgraph $\Gamma$
with probability proportional to $\alpha^{n(\Gamma)}$, $n(\Gamma)$ being the number of 
connected components.
Conditioning $\Gamma$ to contain all the edges in the boundary arc $(a,b)$
creates a random Markov interface $\gamma$  
from $a$ to $b$ which traces the component wired on the arc $(a,b)$.
The case $\alpha=0$ corresponds to the usual Uniform Spanning Tree model,
as considered in Ref.~\refcite{MR2044671}, whose
interface converges to the random Peano curve \makarovsmirnovicmpsle8.

The full massive harmonic measure of the boundary with reflection  
in the unwired part is a discrete MMO, corresponding to (\ref{eq:mo},\ref{eq:mobvp})
with $\beta=1/2$ and $\sigma=0$.
It has a scaling limit, and we show that the drift has to be 
\begin{equation}\label{eq:driftust}
d\lambda_t=16\makarovsmirnovicmpbr{\iint_{\Omega_t} \tilde P_t^{\phantom{(}}\tilde P_t^{(m)}}dt,
\end{equation}
where $\tilde P_t$ is the minimal (Martin's) kernel for Dirichlet/Neumann boundary conditions in 
$(\Omega_t, \gamma(t), b)$ and $\tilde P_t^{(m)}$  its massive counterpart 
in the Loewner chart. 

It would be interesting to interpret the formal expression (\ref{eq:driftust})
and show that the 
SDE is well defined.
Note that for $\alpha>0$ the interface is no longer space-filling,
thus massive {SLE} and {SLE}~are mutually singular,
while both have scaling dimension $2$.

\subsection{Fortuin-Kasteleyn Ising model, $\kappa=16/3$}\label{sub:fk}

The \emph{fermionic} MO, considered in 
Ref.~\refcite{smirnov-icm1,smirnov-fk1,smirnov-fk2,chelkak-smirnov-iso,riva-cardy-hol}
for the random cluster representation of the critical Ising model,
implies that the interface converges to \makarovsmirnovicmpsle{16/3}. 
The MO corresponds to (\ref{eq:mo},\ref{eq:mobvp}) with $\beta=-1/2$ and $\sigma=1/2$,
and becomes a MMO under the perturbation by $p$ --
the weight of an open edge (FK analogue of magnetization).
The techniques of  Ref.~\refcite{chelkak-smirnov-iso} allow
to show the existence of a scaling limit MMO.
It solves a Riemann-Hilbert boundary value problem,
which makes potential theory and hence derivation of drifts
more difficult than in the bosonic case.

\subsection{Critical Ising model, $\kappa=3$}\label{sub:spin}

A similar fermionic MO appears 
(see Ref. \refcite{smirnov-icm1,chelkak-smirnov-spin})
in the usual spin representation of the Ising model
and  implies convergence of the interface (domain wall boundary) to \makarovsmirnovicmpsle3.
This is observable (\ref{eq:mo},\ref{eq:mobvp}) with $\beta=-1$ and $\sigma=1/2$.
When perturbed by the energy field, it becomes a MMO,
which again solves a Riemann-Hilber boundary value problem.
While the techniques of  Ref.~\refcite{chelkak-smirnov-iso} 
should also be applicable, there are similar difficulties as well.

\section{Techniques}\label{sec:proofs}

As an example, we show how to derive the drift and  analyze  the corresponding SDE 
for bosonic MOs (\ref{eq:bmo}). 
We work with their imaginary parts, harmonic MOs (\ref{eq:bmobvp}).
The corresponding MMOs are massive harmonic (\ref{eq:mlap}),
while solving the same BVPs.

\subsection{Deriving the drift}

\begin{proposition}\label{prop:drift}
If a random curve is described by a Loewner evolution and has a  bosonic MMO satisfying (\ref{eq:dbmo}),
then the driving diffusion is given by
\begin{equation}\label{eq:drift}
d\xi_t=\sqrt\kappa dB_t+d\lambda_t, ~~~ d\lambda_t  = \makarovsmirnovicmpbr{\iint_{\Omega_t} m^2 M^{\phantom{(}}_t P^{(m)}_t}dt.
\end{equation}
\end{proposition}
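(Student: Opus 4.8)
The plan is to require that the MMO---call it $M^{(m)}_t$, the solution of the massive boundary value problem (\ref{eq:dbmo})---be a local martingale in $t$ for each fixed interior point $z$, viewed in the boundary chart at $b$, and to extract both the diffusion coefficient and $d\lambda_t$ from the vanishing of its It\^o drift. Throughout, $M_t$ denotes the \emph{critical} (massless) companion of $M^{(m)}_t$: the harmonic extension of the \emph{same} boundary data (\ref{eq:bmobvp}), which is a genuine \makarovsmirnovicmpsle\kappa-martingale by (\ref{eq:sle}),(\ref{eq:led}). Writing $d\xi_t=\sigma_t\,dB_t+\lambda_t\,dt$, the drift of $M^{(m)}_t(z)$ is governed by the generator $\tfrac{\sigma_t^2}2\partial_\xi^2+\lambda_t\partial_\xi+\mathcal L_{\mathrm{cap}}$, where $\partial_\xi$ is the response to the driving term and $\mathcal L_{\mathrm{cap}}$ is the deterministic capacity flow $2Z_t^{-1}\partial$. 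Since the mass is a bulk perturbation that leaves the leading singularity of $M^{(m)}_t$ at $\gamma(t)$ identical to that of $M_t$, the second-order term fixes $\sigma_t^2=\kappa$ exactly as in the massless computation; so $\sigma_t\equiv\sqrt\kappa$ and only $\lambda_t$ remains.

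First I would identify the common factor through which $\lambda_t$ enters, namely the response $\partial_\xi M^{(m)}_t$. Shifting the driving term translates the chart, $Z_t\mapsto Z_t-\varepsilon$, which perturbs the boundary data $\arg Z_t+(1-\tfrac\kappa4)\arg Z_t'$ by a unit mass at the growth point $\gamma(t)$; its massive harmonic extension is by definition the massive Poisson kernel, so $\partial_\xi M^{(m)}_t(z)=P^{(m)}_t(z)$ (the massless check gives the familiar $\partial_\xi M_t=\mathrm{Im}\,P_t$ with $P_t=-1/Z_t$).

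Next I would compute the drift under the \emph{unperturbed} evolution $\lambda\equiv0$ through the exact decomposition
\begin{equation*}
M^{(m)}_t=M_t-\frac{m^2}{2\pi}\iint_{\Omega_t}G^{(m)}_t(z,w)\,M_t(w)\,dA(w),
\end{equation*}
coming from $(\Delta-m^2)(M^{(m)}_t-M_t)=m^2M_t$ with zero boundary values. The critical part $M_t$ has vanishing drift, and since $M_t(w)$ is itself a martingale the drift of the correction is carried entirely by the massive Green's function. Here I would invoke Hadamard's variation formula: under the Loewner flow the boundary of $\Omega_t$ moves only at $\gamma(t)$, so the drift of $G^{(m)}_t(z,w)$ factors through the tip as $\propto P^{(m)}_t(z)\,P^{(m)}_t(w)\,dt$, the boundary contribution dropping because $G^{(m)}_t$ vanishes on $\partial\Omega_t$. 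Integrating against $M_t(w)$ gives
\begin{equation*}
\mathrm{drift}_{\lambda\equiv0}\,M^{(m)}_t(z)=\mathrm{const}\cdot P^{(m)}_t(z)\left(\iint_{\Omega_t}m^2\,M_t\,P^{(m)}_t\,dA\right)dt.
\end{equation*}
Adding $\lambda_t\,\partial_\xi M^{(m)}_t(z)\,dt=\lambda_t P^{(m)}_t(z)\,dt$ and demanding that the total drift vanish, the common factor $P^{(m)}_t(z)$ cancels---consistency for all $z$ being exactly the proportionality of both contributions to $P^{(m)}_t(z)$---and, with the normalization of $G^{(m)},P^{(m)}$ via $N_u$ absorbing the constant, one is left precisely with the scalar identity (\ref{eq:drift}).

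The hard part will be the Hadamard step. Away from criticality there is no conformal formula for $G^{(m)}_t$, so the factorization of its drift through the massive Poisson kernel at $\gamma(t)$---and in particular the vanishing of any second-order (It\^o) contribution generated by the random motion of the tip---must be established directly, most cleanly by transporting to the fixed half-plane, where the mass becomes a variable covariant density and the evolution is driven by $\mathcal L_{\mathrm{cap}}$ together with the quadratic variation of $\xi_t$. The remaining difficulties are softer: the convergence of the area integral, where $P^{(m)}_t$ is singular at $\gamma(t)$ while the boundary data is singular at $b$, and the justification of differentiating under the integral sign, both handled by comparing $P^{(m)}_t$ with the massless kernel near $\gamma(t)$ and by the a~priori bounds on $\lambda_t$ obtained below.
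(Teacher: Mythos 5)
Your proposal is correct and takes essentially the same route as the paper's proof: the resolvent decomposition $M^{(m)}_t=M_t+m^2\,M_t*G^{(m)}_t$, the massive Hadamard formula $dG^{(m)}_t(z,w)=-P^{(m)}_t(z)P^{(m)}_t(w)\,dt$, and cancellation of the common factor $P^{(m)}_t(z)$; your direct identification $\partial_\xi M^{(m)}_t=P^{(m)}_t$ is exactly the paper's identity (\ref{eq:mp}) applied to the drift response, and your sign and $2\pi$ differ only by Green's-function convention. Your two additions---the heuristic fixing of $\sigma_t\equiv\sqrt\kappa$ from the second-order term, and flagging the massive Hadamard formula as the step needing independent proof---go slightly beyond the paper, which assumes the former in the ansatz $d\xi_t=\sqrt\kappa\,dB_t+d\lambda_t$ and uses the latter without proof.
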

\begin{remark}\label{rem:drift} 
The drift can be rewritten in several ways, e.g. as $\iint_{\Omega_t} m^2 M^{(m)}_t P^{\phantom{(}}_t$.
\end{remark}
\begin{proof}
Let $M^{(m)}_t$ be the massive harmonic MMO solving the BVP (\ref{eq:dbmo}), then
\begin{equation}\label{eq:mm}
\Delta^{(m)} M_t^{(m)}= 0\Rightarrow\Delta^{(m)}(M_t^{(m)}-M^{\phantom{(}}_t) 
= m^2 M^{\phantom{(}}_t\Rightarrow M_t^{(m)} = M^{\phantom{(}}_t+m^2 M^{\phantom{(}}_t*G_t^{(m)},
\end{equation}
Similarly for the massive Poisson kernel we have
\begin{equation}\label{eq:mp} 
P_t^{(m)}=P^{\phantom{(}}_t+m^2 P^{\phantom{(}}_t*G_t^{(m)}.
\end{equation}
Describe the massive curve by the Loewner evolution
with the driving term $\xi_t$ 
$$d \xi_t= \sqrt{\kappa}d B_t +d\lambda_t.$$
Denoting drifts with respect to {SLE}~ and massive {SLE}~ by $d^{sle}$ and $d^{msle}$, 
we calculate
\begin{equation}\label{eq:dm}
d^{msle} M^{\phantom{(}}_t= d^{sle} M^{\phantom{(}}_t + {\mathrm{Im}}\makarovsmirnovicmpbr{{d\lambda_t}/{Z_t}}=P^{\phantom{(}}_t d\lambda_t.
\end{equation}
Using the massive version of the Hadamard's variational formula
\begin{equation}\label{eq:hadamard}
d^{msle} G_t^{(m)}(z,w)= - P_t^{(m)}(z) P_t^{(m)}(w)dt,
\end{equation}
we can write the drift for the MMO:
\begin{align*}
0~=\:\:\:\,&~d^{msle} M_t^{(m)}(z)~\overset{(\ref{eq:mm})}{=}
~d^{msle}\makarovsmirnovicmpbr{M^{\phantom{(}}_t+m^2 M^{\phantom{(}}_t*G_t^{(m)}}\\
~=\:\:\,\,&~d^{msle} M^{\phantom{(}}_t + m^2(d^{msle} M^{\phantom{(}}_t)*G_t^{(m)} 
+ m^2 M^{\phantom{(}}_t(\cdot) * (d^{msle} G_t^{(m)}(z,\cdot))\\
\overset{(\ref{eq:dm},\ref{eq:hadamard})}{=}
&~\makarovsmirnovicmpbr{P^{\phantom{(}}_t+m^2 P^{\phantom{(}}_t*G_t^{(m)}}d\lambda_t - m^2 M^{\phantom{(}}_t(\cdot) 
* (P_t^{(m)}(z) P_t^{(m)}(\cdot))dt\\
~\overset{(\ref{eq:mp})}{=}~&~P_t^{(m)}(z)d\lambda_t  - P_t^{(m)}(z) 
\makarovsmirnovicmpbr{\iint_{\Omega_t} m^2 M^{\phantom{(}}_t P_t^{(m)}}dt, 
\end{align*}
so we deduce (\ref{eq:drift}).
\end{proof}

\subsection{Analysis of diffusion}
\begin{proposition}\label{prop:novikov}
In bounded domain $\Omega$  the massive \makarovsmirnovicmpsle4 driven by the diffusion (\ref{eq:drift})
for $\kappa=4$
is  absolutely continuous
with respect to \makarovsmirnovicmpsle4.
\end{proposition}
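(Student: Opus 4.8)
The plan is to realize the law of the massive \makarovsmirnovicmpsle4 as an absolutely continuous (Girsanov) tilt of the law of \makarovsmirnovicmpsle4. By Proposition~\ref{prop:drift} the driving term solves
$$d\xi_t=2\,dB_t+\lambda_t\,dt,\qquad \lambda_t=\iint_{\Omega_t}m^2 M^{\phantom{(}}_t P_t^{(m)},$$
so that, relative to the measure under which $\xi_t$ is $2B_t$, the Radon--Nikodym derivative is the Girsanov exponential built from $\lambda_t$. Absolute continuity thus reduces to the Novikov condition (\ref{eq:novikov}), and, exactly as in the massive LERW case, it suffices to prove the much stronger \emph{deterministic} a priori bound $\int_0^\infty\lambda_t^2\,dt\le C<\infty$ on (almost) every Loewner path. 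The whole proposition is therefore an a priori estimate on the drift.

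What makes $\kappa=4$ tractable is that the coefficient $1-\kappa/4$ in (\ref{eq:bmo}) vanishes: the bosonic MO reduces to $M_t=\arg Z_t$, so $0\le M_t\le\pi$, and since $M_t^{(m)}$ solves the massive Laplace equation (\ref{eq:mlap}) with the same boundary data in $[0,\pi]$, the maximum principle for massive harmonic functions gives $0\le M_t^{(m)}\le\pi$ as well. With either form of the drift (Proposition~\ref{prop:drift} or Remark~\ref{rem:drift}), and using $P_t^{(m)}\ge0$, this already yields $|\lambda_t|\le\pi\iint_{\Omega_t}m^2 P_t^{(m)}$, with no contribution from the a priori unbounded $\arg Z_t'$ term present for other $\kappa$. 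I would then rewrite the surviving area integral as a boundary object: letting $w_t$ solve $\Delta w_t-m^2 w_t=0$ in $\Omega_t$ with $w_t=1$ on $\partial\Omega_t$ (the massive harmonic measure of the whole boundary, equivalently the survival probability of Brownian motion killed at rate $m^2$), one has $1-w_t=m^2\iint_{\Omega_t}G_t^{(m)}(\cdot,\zeta)\,dA(\zeta)$ and hence
$$\iint_{\Omega_t}m^2 P_t^{(m)}=N_{\gamma(t)}\makarovsmirnovicmpbr{1-w_t}.$$
Since $1-e^{-x}\le x$ gives $0\le 1-w_t\le m^2 V_t$, where $V_t$ is the torsion function of $\Omega_t$ ($\Delta V_t=-1$, $V_t=0$ on $\partial\Omega_t$), and both sides vanish at the tip, we obtain $|\lambda_t|\le\pi m^2\,N_{\gamma(t)}V_t$.

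It remains to bound $N_{\gamma(t)}V_t$ uniformly and to integrate in time. Passing to the Loewner chart $Z_t$, the function $V_t\circ\phi_t$ ($\phi_t=Z_t^{-1}\colon{\mathbb C}_+\to\Omega_t$) solves $\Delta(V_t\circ\phi_t)=-|\phi_t'|^2$ with zero boundary values, so
$$N_{\gamma(t)}V_t=\frac1\pi\iint_{{\mathbb C}_+}\frac{{\mathrm{Im}}\,\zeta}{|\zeta|^2}\,|\phi_t'(\zeta)|^2\,dA(\zeta).$$
Here $\iint_{{\mathbb C}_+}|\phi_t'|^2=|\Omega_t|\le|\Omega|<\infty$ because $\Omega$ is bounded, and the weight ${\mathrm{Im}}\,\zeta/|\zeta|^2\le 1/|\zeta|$ has only an integrable singularity at $0$; standard distortion estimates for the $\phi_t$ should then give $|\lambda_t|\le C(\Omega,m)$, uniformly in $t$ and in the path. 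For the time integral I would use that, as $\gamma(t)\to b$, the conformal size of $\Omega_t$ seen from $b$ decays (geometrically) in the capacity parameter, so that the integral above is controlled by $\mathrm{diam}(\Omega_t)$ and $\int_0^\infty\lambda_t^2\,dt<\infty$ pathwise; then (\ref{eq:novikov}) holds trivially and Girsanov's theorem finishes the proof. I expect the genuine difficulty to lie exactly in these two conformal-geometry inputs --- the uniform-in-$t$ control of $\phi_t'$ near the moving tip $\gamma(t)$ entering the weighted area integral, and the quantitative decay of the size of $\Omega_t$ that makes the time integral converge. The role of $\kappa=4$ is precisely that the bound $0\le M_t\le\pi$ collapses the entire drift estimate to these two facts.
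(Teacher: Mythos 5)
Your opening is exactly the paper's: reduce absolute continuity to Novikov's condition (\ref{eq:novikov}) via a \emph{pathwise deterministic} bound on $\int_0^\infty\lambda_t^2\,dt$, and exploit that at $\kappa=4$ the coefficient $1-\kappa/4$ in (\ref{eq:bmo}) vanishes, so the maximum principle gives $M^{(m)}_t\in[0,\pi]$ and hence $|\lambda_t|\lesssim \iint_{\Omega_t}P_t$ (the paper uses Remark~\ref{rem:drift}'s form $\iint m^2M^{(m)}_tP^{\phantom{(}}_t$ with the bounded massive observable against the \emph{massless} kernel; your variant with bounded $M_t$ against $P^{(m)}_t\le P_t$ is equivalent). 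But from this point the paper has a one-line argument that you miss entirely: square the bound, write $\lambda_t^2\lesssim \iint_{(z)}\iint_{(w)}P_t(z)P_t(w)$, and recognize this product, via Hadamard's variational formula (\ref{eq:hadamard}) in its massless form $dG_t(z,w)=-P_t(z)P_t(w)\,dt$, as the exact time derivative of $-\iint\iint G_t$. The time integral then \emph{telescopes}: $\int_0^\infty\lambda_t^2\,dt\lesssim\iint\iint\left[G_0-G_\infty\right]\le\iint\iint G_0\le C<\infty$, the last step using only $G_{\Omega_0}(z,w)\le-\log|z-w|+\mathrm{const}$ for bounded $\Omega$. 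No pointwise control of $\lambda_t$, no distortion estimates, and no decay rates are needed anywhere.

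Your substitute for this step is where the genuine gap sits. The chain through the massive harmonic measure $w_t$ and the torsion function, ending in $|\lambda_t|\le\pi m^2 N_{\gamma(t)}V_t$, is plausible, but the two inputs you then invoke are unproven and are not routine: (i) uniform boundedness of $\iint_{{\mathbb C}_+}\frac{{\mathrm{Im}}\,\zeta}{|\zeta|^2}|\phi_t'(\zeta)|^2\,dA(\zeta)$ does not follow from ``standard distortion estimates'' plus $\iint|\phi_t'|^2=|\Omega_t|$, since the weight blows up like $1/|\zeta|$ precisely at the image of the moving tip, where $|\phi_t'|$ is not controlled by the area bound and the boundary may be irregular; and (ii) even granting a uniform bound $|\lambda_t|\le C$, that is useless over the infinite capacity-time horizon, and your claimed geometric decay of the conformal size of $\Omega_t$ in capacity time is asserted without argument --- note you must prove it \emph{pathwise for a general Loewner curve}, since absolute continuity (which would let you borrow SLE estimates) is the very thing being proved. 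You flag both points yourself as ``the genuine difficulty,'' which is an accurate self-assessment: as written the Novikov condition is not established. The lesson to extract from the paper's proof is that Hadamard's formula turns the square of the $L^1$-type drift bound into an exact time derivative, so the entire time integral collapses to a Green's-function integral over the initial bounded domain, dissolving both of your difficulties at once.
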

\begin{proof}
First recall that in this case the harmonic MMO  is bounded by the maximum principle since
$M_t^{(m)}=\arg Z_t\in[0,\pi]$ on $\partial\Omega_t$ by (\ref{eq:dbmo}). Thus by Remark~\ref{rem:drift} 
\begin{equation}\label{eq:lest}
|\lambda_t|=\iint_{\Omega_t}m^2 M^{m}_t P^{\phantom{(}}_t\lesssim \iint_{\Omega_t} P_t.
\end{equation}
Now for a bounded initial domain we can write using the Hadamard's variational formula (\ref{eq:hadamard}):
\begin{align*}
\int_0^\infty\lambda^2_t~dt&\overset{(\ref{eq:lest})}{\le}
\int_0^\infty dt\iint_{(z)} P_t(z) \iint_{(w)} P_t(w)
\overset{(\ref{eq:hadamard})}{=}-\int_0^\infty \iint_{(z)}\iint_{(w)} d G_t(z,w)\\
&=\iint_{(z)}\iint_{(w)}\left[G_0(z,w)-G_\infty(z,w)\right]\le \iint_{(w)}\iint_{(z)} G_0(z,w)\le C<\infty.
\end{align*}
The last step follows since $\Omega_0$ 
is bounded and so
$$G_{\Omega_0}(z,w)\le -\log{|z-w|}+\mathrm{const}.$$
Applying the Novikov's criterion (\ref{eq:novikov}),
we conclude that our diffusion
is well defined and its law is absolutely continuous 
with respect to \makarovsmirnovicmpsle{4}.
\end{proof}

\section{Questions}\label{sec:quest}

We would like to end with a few questions,
which originated in our work.

\begin{question}\label{q1}
How many ``physically interesting'' (e.g. relevant for the renormalization group)
massive perturbations can a CFT have?
The arguments above display a one-parameter family of covariant holomorphic
MO for each \makarovsmirnovicmpsle{\kappa},
suggesting that somehow there is a one-parameter family of ``canonical'' 
massive perturbations.
Are perturbations by non-holomorphic MO also relevant? 
Are all perturbations generated by MO perturbations? 
\end{question}

\begin{question}\label{q2}
Is it always true that diffusion driving a massive version of 
\makarovsmirnovicmpsle\kappa~is a speed $\kappa$ Brownian motion plus a drift?
In the absence of conformal invariance the drift forcibly depends on 
the geometry of the domain grown. 
Are all such drifts locally bounded variation path functionals? 
\end{question}

\begin{question}\label{q3}
The drifts so far encountered are either absolutely continuous 
(e.g. LERW and HE cases above)
or monotone (e.g. UST case above or percolation perturbations 
discussed 
in Ref.~\refcite{nolin-werner,garban-pete-schramm}).
Are all the possible drifts combinations of those?
\end{question}

\begin{question}\label{q4}
Can one show that a {SLE}~ and its massive version are
always in the same universality class
(in the sense of scaling exponents)?
\end{question}

\begin{question}\label{q5}
For $\kappa\le4$ in our examples the massive and usual {SLE}~are mutually absolutely continuous. 
Is it true for all perturbations when $\kappa\le4$?
\end{question}

\begin{question}\label{q6}
Is it it true that for $\kappa>4$ the massive and usual {SLE}~ are singular, 
while being in the same universality class?
If not, for which perturbations are they mutually absolutely continuous?
We expect the bosonic ones to be among those.
\end{question}

\begin{question}\label{q7}
\makarovsmirnovicmpsle\kappa~almost surely produce simple curves for $\kappa\le4$ and curves with double points for
$\kappa>4$. ``Resampling'' the model at a double point alters the curve drastically.
Is this the reason for the absolute continuity / singularity dichotomy suggested above?
\end{question}

\begin{question}\label{q9}
Does discrete percolation observable from Ref. \refcite{0985.60090}
have a massive counterpart? If so, which perturbation
does it correspond to?
This observable requires three marked points, but when two are fused,
its continuous counterpart becomes the observable (\ref{eq:mo},\ref{eq:mobvp})
with $\beta=1/3$ and $\sigma=0$.
\end{question}

\begin{question}\label{q10}
We have MMOs for at least five different values of $\kappa$.
Guess by analogy the driving diffusions for other values of $\kappa$,
and show that those are well-defined and lead to random curves.
\end{question}

\begin{question}\label{q11}
In particular,  FK Ising MMO from section~\ref{sub:fk}
belongs to a family of MMOs arising from (\ref{eq:mo},\ref{eq:mobvp}) with  $\sigma=-\beta=-1+8/\kappa$,
cf. Ref. \refcite{smirnov-icm1}.
We believe that all those correspond to magnetization perturbations.
Is it true, in particular for $\kappa=3$ (spin Ising)?
If it holds for $\kappa=6$, can one make a connection with a heuristic
formula $d\lambda_t=\mathrm{const} |d\gamma(t)|^{3/4} |dt|^{1/2}$
for the drift from Ref.~\refcite{garban-pete-schramm}?
\end{question}

\begin{question}\label{q12}
Similarly, starting from the spin Ising MMO from section~\ref{sub:spin},
we can ask (cf. Ref. \refcite{smirnov-icm1}) whether
all MMOs arising from (\ref{eq:mo},\ref{eq:mobvp}) with $\sigma=-\beta/2=3/\kappa-1/2$
correspond to fugacity perturbations?
Note that while LERW MMO from section~\ref{sub:lerw} does not belong to this family,
its differential does.
\end{question}

\begin{question}\label{q8}
The $O(N)$ model has conjecturally two critical regimes,
corresponding to parameter $x$ 
(bond weight in the loop representation of the high temperature expansion)
belonging to $\makarovsmirnovicmpbrs{x_c}$ and $]x_c,\infty[$, 
see Ref. \refcite{kager-nienhuis}.
Interfaces conjecturally converge to \makarovsmirnovicmpsle{\kappa} and \makarovsmirnovicmpsle{\tilde\kappa} correspondingly,
where 
$\kappa$ and $\tilde\kappa$ are known functions of $N$ and satisfy duality 
(though different from the usual duality $\kappa\kappa^*=16$):
\begin{equation}\label{eq:dua}
\frac1\kappa+\frac1{\tilde\kappa}=\frac12,~~\kappa\in[8/3,4],~\tilde\kappa\in[4,8].
\end{equation}
As in question~\ref{q12} we expect to have in the first regime a discrete holomorphic MO
corresponding to (\ref{eq:mo},\ref{eq:mobvp}) with $\sigma=-\beta/2=\frac3\kappa-2$,
whose massive counterpart corresponds to perturbation of $x\approx x_c$.
Then tending mass to $\infty$ we should arrive to the second critical regime.
Can we observe this effect in massive {SLE}{s}?
Namely, does such massive \makarovsmirnovicmpsle{\kappa}
tend to usual \makarovsmirnovicmpsle{\tilde\kappa} as $m\to\infty$? 
In particular, does massive \makarovsmirnovicmpsle3 tend to \makarovsmirnovicmpsle6?
Do other observables lead to different dualities?
Can we observe that tending mass to $-\infty$ leads to the frozen regime $x\in[0,x_c[$?
\end{question}

\begin{question}\label{q13}
Can massive observables (which are in some sense better behaved)
help to understand the critical ones?
This is the case in Ref.~\refcite{beffara-duminil-smirnov}, where we establish
the conjectured value of the critical temperature for FK model with $q\ge4$.
\end{question}

\section*{Acknowledgments}

This research was supported by the  N.S.F. Grant No. 0201893,
by the Swiss N.S.F.  and by the European Research Council AG CONFRA.

\end{document}